\documentclass[conference]{IEEEtran}
\usepackage{amsmath,amssymb}
\usepackage{amsthm}
\usepackage{graphicx}
\usepackage{cite}
\usepackage{algorithm}
\usepackage{algpseudocode}
\usepackage{subfigure}
\usepackage{xcolor}
\usepackage[normalem]{ulem}
\usepackage{pgfplots}
\usepackage{pgfplotstable}
\pgfplotsset{compat=1.18}
\usepackage{hyperref}
\newtheorem{theorem}{Theorem}
\newtheorem{remark}{Remark}
\title{Learning-Accelerated Joint AP/RIS Optimization for Low-Latency Indoor mmWave XR Networks}

\author{
\IEEEauthorblockN{Ali Elghariani\IEEEauthorrefmark{1}, Mohammed Elbamby\IEEEauthorrefmark{4}, 
and Anis Elgabli\IEEEauthorrefmark{5}}
\IEEEauthorblockA{\IEEEauthorrefmark{1}Electrical Engineering Department, King Abdulaziz University, Saudi Arabia\\
\IEEEauthorrefmark{4}Telef\'onica Scientific Research, Barcelona, Spain\\
\IEEEauthorrefmark{5}Industrial and Systems Engineering Department\\ \IEEEauthorrefmark{5}Interdisciplinary Research Center for Communication Systems and Sensing (IRC-CSS),\\ King Fahd University of Petroleum \& Minerals (KFUPM), Saudi Arabia.
}
}

\begin{document}
\maketitle

\begin{abstract}

This paper studies a reconfigurable intelligent surface (RIS)-assisted indoor architecture for extended reality (XR) applications, where multiple access points (APs) simultaneously serve multiple users over a shared frequency channel under a centralized AP controller. We formulate a non-convex sum-rate maximization problem with coupled AP precoders and RIS phase shifts, and develop a centralized alternating optimization (AO) framework for joint design. We rigorously prove that the proposed AO guarantees monotonic improvement of the sum rate that converges to a coordinate-wise stationary point, despite the non-convexity of the resulting subproblems. Although AO provides reliable performance, its iterative nature limits real-time applicability as the system dimension grows. To overcome this limitation, we propose a supervised learning-assisted approach that leverages AO-generated solutions to significantly accelerate convergence.
Simulation results demonstrate that the proposed framework achieves near-optimal performance with substantially fewer AO iterations, which makes it well-suited for latency-critical RIS-assisted XR deployments.

\end{abstract}

\begin{IEEEkeywords}
RIS, 60GHz, mmWave, beamforming, alternating optimization, blockage, supervised learning.
\end{IEEEkeywords}

\section{Introduction}
Immersive extended reality (XR) applications such as collaborative gaming, virtual training, and interactive simulation are expected to place stringent demands on future indoor wireless networks\cite{ghourab2025extended}. These applications require not only multi-gigabit data rates, but also low and predictable latency to preserve real-time interaction and quality of experience. Millimeter-wave (mmWave) communication, particularly in the 60 GHz band, offers the large bandwidth needed to support such services\cite{redondi2024survey}. However, mmWave indoor links are highly vulnerable to blockage by walls, furniture, and human bodies, which can rapidly interrupt connectivity and degrade system performance.

A particularly challenging XR scenario arises when multiple access points (APs) must simultaneously serve multiple users over the same time-frequency resource in a dense indoor environment. In practice, some users may lose the direct path to their serving APs due to blockage, while still requiring uninterrupted high-rate service. This motivates the need for a network architecture that can restore connectivity through controllable reflected propagation while still supporting parallel multi-user transmission. Reconfigurable Intelligent Surfaces (RISs) provide a promising mechanism for this purpose by creating virtual line-of-sight paths through programmable reflection \cite{hassouna2023survey}. Early and recent RIS studies, such as \cite{iqbal2025comprehensive}, \cite{sun2024sum}, \cite{li2023mmwave} have investigated a variety of optimization techniques in single-cell scenarios. Authors in \cite{yang2024ris} studied the case of cooperative multi cell scenario in integrated sensing and communications applications. However, to the best of the authors' knowledge, the impact of inter-cell, or equivalently inter-AP, interference in multi-AP RIS-assisted systems remains insufficiently explored.

In this paper, we focus on a unique indoor network architecture that is well suited to XR environments. The architecture consists of multiple APs serving multiple users simultaneously on a single frequency channel, under the control of a centralized AP controller. The specific operating condition of interest is when multiple users experience blockage on their respective direct AP-to-user links, and communication must therefore be sustained through a single RIS deployed in the environment. In this setting, the RIS is not merely used to boost one link or extend the coverage of a single cell. Instead, one common RIS must simultaneously support several blocked users associated with different APs, while shaping the reflected signals so that useful transmission is enhanced and inter-AP interference is controlled.

This architecture leads to an important and largely underexplored design question: can a single RIS be configured to enable simultaneous multi-AP multi-user communication over one frequency while keeping interference manageable and ensuring reliable network performance? Addressing this question requires the RIS to act as a shared spatial coordination device across multiple AP-user pairs. Since the reflected paths generated by one AP may help one user while harming another, the RIS phase configuration must be jointly designed with the AP precoders so that the overall network sum-rate is improved rather than degraded. This joint design becomes even more important in the considered centralized architecture, where all APs are coordinated by one controller and can therefore be optimized together.

Motivated by this problem, we formulate a RIS-assisted sum-rate optimization problem for the above indoor architecture. The optimization jointly designs the transmit precoding of the APs under a zero-forcing (ZF) constraint and the phase shifts of the RIS elements under the assumption that some users' direct links are blocked and hence their communication takes place through the RIS. The problem is inherently non-convex due to the coupling between active beamforming and passive beamforming, and its size grows rapidly with the number of APs, users, and RIS elements. To solve it, we adopt an alternating optimization (AO) framework implemented at the centralized AP controller. In each iteration, the AP precoders and RIS phases are updated in turn, enabling coordinated interference-aware transmission through the shared RIS. Despite the non-convexity of the resulting subproblems, we show that the AO iterates converge to a coordinate-wise locally optimal solution of the discretized problem.

While centralized AO provides a structured way to solve the problem, its practical use in XR systems is challenged by computational latency. Because the optimization dimension becomes large in networks with many APs, users, and RIS elements, repeated AO iterations can be costly in online operation. This creates a gap between optimization quality and real-time deployability. To bridge this gap, we incorporate a learning-based acceleration strategy. Specifically, offline AO solutions are used to train a lightweight model that predicts a high-quality initialization for the joint design variables from the network state. During online operation, this learned initialization places the AO procedure much closer to the final solution, thereby reducing the number of required iterations and accelerating convergence.

\section{System Model}

Consider an indoor millimeter-wave communication system consisting of multi-antenna access points (APs), one AP controller (APC), multiple single-antenna user equipments (UEs), and one reconfigurable intelligent surface (RIS), as illustrated in Fig.~\ref{fig:systemModel}. The APs are equipped with $M$ antennas each, while the RIS consists of \(N\) passive elements arranged in an \(N_x \times N_y\) uniform planar array (UPA). The system operates at 60 GHz carrier frequency with wavelength \(\lambda = 5\) mm. A total number of APs \(J\) simultaneously serve \(K\) UEs, with \(J=K\). Note that in this XR system model, every AP serves a single user at a time, thus, there is no case such that $J \neq K$. The APs are positioned at fixed locations. A critical assumption in the system model is that the direct AP--UE links are completely blocked. Note that the assumptions $J=K$ and blocked direct links are adopted to simplify the analysis and isolate the shared-RIS interference coordination problem.

\begin{figure}[!t]
    \centering
    \includegraphics[width=0.38\textwidth,height=0.21\textwidth]{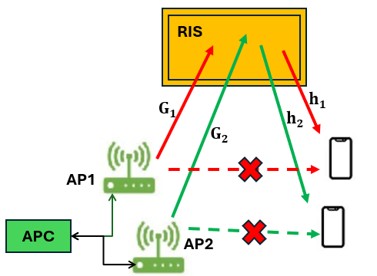}
    \caption{System model for two blocked users assisted by a single RIS.}
    \label{fig:systemModel}
\end{figure}

The wireless channel is modeled using the Saleh–Valenzuela (SV) model, in which propagation consists of a finite number of dominant clusters, each containing multiple rays with distinct complex gains, delays, and AoA/AoD parameters. This model is particularly suitable for mmWave systems as propagation at such high frequencies is typically sparse, and it captures both large-scale path loss and small-scale fading effects. Further details are in \cite{saleh1987statistical}, \cite{li2022ergodic}.

The AP\(_j\)-RIS channel \(\mathbf{G}_j \in \mathbb{C}^{N \times M}\) and RIS-UE\(_k\) channel \(\mathbf{h}_{k}^H \in \mathbb{C}^{1 \times N}\). Let \(\mathbf{\Theta} = \mathrm{diag}(\theta_1, \theta_2, \ldots, \theta_N)\) denote the RIS reflection matrix, where \(\theta_n = e^{j\phi_n}\) with \(\phi_n \in [0, 2\pi)\) representing the phase shift of the \(n\)-th RIS element. The effective cascaded channel from AP\(_j\) to UE\(_k\) via the RIS is given by
\[
h_{\text{eff},k,j} = \mathbf{h}_{k}^H \mathbf{\Theta} \mathbf{G}_j \mathbf{w}_j
\]
where \(\mathbf{w}_j \in \mathbb{C}^{M \times 1}\) is the precoding vector at AP\(_j\) with power constraint \(\|\mathbf{w}_j\|^2 \leq P_j\), and \(P_j = 1\) W is the maximum transmit power per AP.

The received signal at UE\(_k\) can be expressed as
\[
y_k = \underbrace{h_{\text{eff},k,j} s_j}_{\text{desired signal}} + \underbrace{\sum_{j \neq k} h_{\text{eff},k,j} s_j}_{\text{inter-user interference}} + n_k
\]
for \(k \in \{1,2,\ldots,K\}\), \(j \in \{1,2,\ldots,J\}\), and \(s_j \sim \mathcal{CN}(0,1)\) is the transmitted symbol from AP\(_j\) to UE\(_k\), and \(n_k \sim \mathcal{CN}(0,\sigma^2)\) is the additive white Gaussian noise.

In this system, we consider a centralized multi-AP architecture in which all APs are coordinated by a common AP controller (APC). The APC is assumed to have access to the channel state information (CSI) of all AP--RIS and RIS--UE links, which can be acquired through uplink pilot transmission and centralized CSI aggregation. Specifically, users transmit pilot sequences during the uplink training phase, each AP locally estimates the associated channels, and the resulting CSI is forwarded to the APC for centralized beamforming and RIS optimization.

\section{Problem Formulation}

For a given RIS configuration, assume that the initial values of $\boldsymbol{\Theta}$ can be obtained from random phase-shifts. The cascaded AP--RIS--UE channels between all APs and users are collected into the effective channel matrix
\begin{equation}
\mathbf{H}_{\text{eff}} =
\begin{bmatrix}
\mathbf{h}_1^{H} \boldsymbol{\Theta} \mathbf{G}_1 &
\cdots &
\mathbf{h}_1^{H} \boldsymbol{\Theta} \mathbf{G}_J \\
\vdots & \ddots & \vdots \\
\mathbf{h}_K^{H} \boldsymbol{\Theta} \mathbf{G}_1 &
\cdots &
\mathbf{h}_K^{H} \boldsymbol{\Theta} \mathbf{G}_J
\end{bmatrix}
\in \mathbb{C}^{K \times (J M)}.
\label{eq:Heff}
\end{equation}

Given $\mathbf{H}_{\mathrm{eff}}$, the APC computes the centralized zero-forcing (ZF) precoder as in (\ref{eq:Wzf}) below. To avoid numerical instability when the effective channel becomes ill-conditioned, the implementation employs the Moore Penrose pseudo inverse or regularized ZF. 
\begin{equation}
\mathbf{W}_{\mathrm{ZF}} =
\left(\mathbf{H}_{\mathrm{eff}}\right)^H
\left(\mathbf{H}_{\mathrm{eff}}\left(\mathbf{H}_{\mathrm{eff}}\right)^H\right)^{-1}.
\label{eq:Wzf}
\end{equation}
The transmit precoder associated with AP $j$ is then extracted and normalized to satisfy the per-AP power constraint as:
\begin{equation}
\mathbf{w}_j =
\sqrt{P_j}\,
\frac{[\mathbf{W}_{\mathrm{ZF}}]_{:,j}}
{\|[\mathbf{W}_{\mathrm{ZF}}]_{:,j}\|},
\quad \forall j \in \{1,\dots,J\}.
\label{eq:wj}
\end{equation}

Our objective is to jointly optimize the active AP precoders $\{\mathbf{w}_j\}_{j=1}^J$ and the passive RIS phase-shift matrix $\boldsymbol{\Theta}=\mathrm{diag}(e^{j\phi_1},\dots,e^{j\phi_N})$ in order to maximize the network sum rate. The resulting optimization problem is formulated as
\begin{equation}
\begin{aligned}
\max_{\{\mathbf{w}_j\},\,\boldsymbol{\Theta}}
\quad &
\sum_{k=1}^{K}
\log_2\!\left(
1+
\frac{
|\mathbf{h}_k^H \boldsymbol{\Theta}\mathbf{G}_k \mathbf{w}_k|^2
}{
\sum_{j\neq k}
|\mathbf{h}_k^H \boldsymbol{\Theta}\mathbf{G}_j \mathbf{w}_j|^2 + \sigma^2
}
\right) \\
\text{s.t.}\quad
& \|\mathbf{w}_j\|^2 \le P_j, \quad \forall j=1,\dots,J, \\
& |\Theta_{n,n}| = 1, \quad \forall n=1,\dots,N.
\end{aligned}
\label{eq:origOpt}
\end{equation}

Problem \eqref{eq:origOpt} is non-convex due to the coupled optimization variables and the unit-modulus constraints imposed on the RIS elements. In particular, the AP precoders and RIS phases are strongly coupled through the cascaded effective channels, which makes direct joint optimization intractable.

\subsection{Alternating Optimization Framework}
\label{subsec: AO}
To solve \eqref{eq:origOpt}, we adopt a centralized alternating optimization (AO) framework that iteratively updates the active AP precoders and passive RIS phase shifts. At each iteration, the algorithm alternates between: i) AP precoder optimization with fixed RIS phases, and ii) RIS phase optimization with fixed AP precoders.

For fixed RIS phases at iteration $t$, $\boldsymbol{\Theta}^{(t)}$, the AP precoders are updated using the ZF solution in \eqref{eq:Wzf}--\eqref{eq:wj}, which mitigates inter-user interference through centralized precoding. The updated precoders are accepted only if they yield a non-decreasing sum rate, i.e.,
\begin{equation}
R(\mathbf{W}^{(t+1)},\boldsymbol{\Theta}^{(t)})
\ge
R(\mathbf{W}^{(t)},\boldsymbol{\Theta}^{(t)}).
\label{eq:monotone_ZF}
\end{equation}

Next, for fixed precoders $\{\mathbf{w}_j^{(t+1)}\}$, the RIS phase shifts are optimized sequentially using discrete coordinate descent. Specifically, the $n$-th RIS element is updated over the discrete phase set
\begin{equation}
\mathcal{F}=
\left\{
0,\frac{2\pi}{L},\dots,\frac{2\pi(L-1)}{L}
\right\},
\label{eq:Fset}
\end{equation}
where $L$ denotes the RIS phase resolution. The phase of the $n$-th element is selected as
\begin{equation}
\phi_n^{(t+1)}=
\arg\max_{\phi\in\mathcal{F}}
R\!\left(
\mathbf{W}^{(t+1)},
\Theta\!\left(\phi \mid \{\phi_i^{(t+1)}\}_{i<n},\{\phi_i^{(t)}\}_{i>n}\right)
\right).
\label{eq:RIS_update}
\end{equation}

The AO iterations continue until convergence, which is declared when the relative sum-rate improvement satisfies
\begin{equation}
\frac{|R^{(t)}-R^{(t-1)}|}{R^{(t-1)}} < \epsilon,
\label{eq:convergence_condition}
\end{equation}
or when the maximum number of iterations $T_{\max}$ is reached.

The proposed AO framework guarantees monotonic convergence, since each subproblem is solved while fixing the other block of variables and each update yields a non-decreasing objective value.

\subsection{Convergence Analysis}

This subsection analyzes the convergence properties of the proposed alternating optimization algorithm, which serves as the core optimization engine in the overall framework. In the complete system, AO is used both as an offline solver to generate supervised learning labels and as an optional online refinement stage after the learned initialization. The convergence behavior of the proposed AO algorithm follows from the monotonicity of the ZF acceptance rule and the discrete coordinate-descent RIS update. The optimization problem is inherently non-convex. However, we prove that the proposed AO algorithm monotonically increases the system sum-rate and converges to a \emph{coordinatewise locally optimal point}.

\begin{theorem}[Finite Monotone Convergence and Coordinatewise Optimality of the AO Solution]
\label{lemma:finite_convergence}
For the discretized RIS phase set \(\mathcal{F}\), the proposed AO algorithm generates a non-decreasing sequence of sum-rate values and converges in a finite number of iterations to a coordinate-wise locally optimal solution of the discretized problem.
\end{theorem}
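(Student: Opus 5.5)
The argument has three parts: monotonicity, boundedness together with finiteness of the search space, and a characterization of the limit. I would write $R(\mathbf{W},\boldsymbol{\Theta})$ for the objective in \eqref{eq:origOpt} and consider one full AO iteration $t\to t+1$ as two half-steps. First, the ZF step is accepted only under \eqref{eq:monotone_ZF}; otherwise $\mathbf{W}^{(t+1)}=\mathbf{W}^{(t)}$ is kept. Either way $R(\mathbf{W}^{(t+1)},\boldsymbol{\Theta}^{(t)})\ge R(\mathbf{W}^{(t)},\boldsymbol{\Theta}^{(t)})$. Second, in the sequential RIS sweep \eqref{eq:RIS_update}, the current phase $\phi_n^{(t)}$ belongs to $\mathcal{F}$. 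This holds provided the initialization is drawn from (or projected onto) $\mathcal{F}$, which I would state explicitly as an assumption. Hence the maximizer over $\mathcal{F}$ can do no worse than keeping the current value, and each of the $N$ coordinate updates is non-decreasing. Chaining these inequalities gives $R^{(t+1)}\ge R^{(t)}$. I would also adopt a tie-breaking rule: keep the incumbent phase unless the improvement is strict.

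Next comes boundedness. Under $\|\mathbf{w}_j\|^2\le P_j$ and $|\Theta_{n,n}|=1$, each SINR is bounded above by $P_k\|\mathbf{h}_k\|^2\|\mathbf{G}_k\|^2/\sigma^2$, since $|\mathbf{h}_k^H\boldsymbol{\Theta}\mathbf{G}_k\mathbf{w}_k|\le\|\mathbf{h}_k\|\|\mathbf{G}_k\|_2\sqrt{P_k}$. So $R$ is bounded, and the monotone sequence converges. For \emph{finite} termination I would use a finiteness argument rather than a limit argument. The RIS configuration lives in the finite set $\mathcal{F}^N$ of size $L^N$. Given $\boldsymbol{\Theta}$, the ZF candidate produced by \eqref{eq:Wzf}--\eqref{eq:wj} is a deterministic function of $\boldsymbol{\Theta}$. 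Consequently the reachable precoders are either the initial $\mathbf{W}^{(0)}$ or one of at most $L^N$ ZF candidates, so the pair $(\mathbf{W}^{(t)},\boldsymbol{\Theta}^{(t)})$ ranges over a finite set. With the strict-improvement tie-breaking, any iteration that changes the state strictly increases $R$, so no state can be revisited. Hence after finitely many iterations the state is fixed. At that point \eqref{eq:convergence_condition} is also triggered, and it is triggered within $T_{\max}$ if $T_{\max}$ exceeds the number of reachable states.

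Finally, at the fixed point $(\mathbf{W}^\star,\boldsymbol{\Theta}^\star)$, a full sweep changes nothing. Therefore, for every $n$ and every $\phi\in\mathcal{F}$, $R(\mathbf{W}^\star,\boldsymbol{\Theta}^\star)\ge R(\mathbf{W}^\star,\boldsymbol{\Theta}(\phi\mid\phi^\star_{-n}))$, and the ZF candidate for $\boldsymbol{\Theta}^\star$ does not strictly improve $R$. This is exactly coordinate-wise local optimality of the discretized problem: no single RIS element change and no ZF precoder update improves the sum rate.

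The main obstacle is the finite-termination step. The precoder block is continuous, and the ZF update is not a true block maximizer, so the optimality claimed for the $\mathbf{W}$-block is only with respect to the ZF update. I would phrase the theorem's notion of coordinate-wise optimality accordingly. I also need the tie-breaking rule to rule out cycling among equal-rate states. If tie-breaking is not available, I would fall back to convergence of $R^{(t)}$ together with termination by the $\epsilon$-criterion \eqref{eq:convergence_condition}, which follows from the convergence of the bounded monotone sequence.
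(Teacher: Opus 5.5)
Your argument follows the same monotonicity-plus-finiteness route as the paper's proof, and it is more rigorous. The paper passes directly from $|\mathcal{F}^N|<\infty$ to finitely many achievable objective values without addressing the continuous precoder block, and it also leaves implicit the requirement $\boldsymbol{\Theta}^{(0)}\in\mathcal{F}^N$ needed for the first sweep to be monotone. Your observation that $\mathbf{W}^{(t)}\in\{\mathbf{W}^{(0)}\}\cup\{\mathrm{ZF}(\boldsymbol{\Theta}):\boldsymbol{\Theta}\in\mathcal{F}^N\}$, together with phase tie-breaking, is exactly what makes that step valid.

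One small repair is needed. Since \eqref{eq:monotone_ZF} accepts ties, an iteration can change $\mathbf{W}$ without increasing $R$, so your claim that every state change strictly increases $R$ does not hold as stated. Either make the ZF acceptance strict as well, or note that $\boldsymbol{\Theta}$ is frozen during any constant-rate stretch. In that case the ZF candidate is fixed, so such a tie-change happens at most once per stretch.
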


\begin{IEEEproof}
At iteration \(t\), the ZF update is accepted only if
\begin{equation}
R(\mathbf{W}^{(t+1)},\boldsymbol{\Theta}^{(t)})
\geq
R(\mathbf{W}^{(t)},\boldsymbol{\Theta}^{(t)}).
\end{equation}
For fixed \(\mathbf{W}^{(t+1)}\), each RIS phase element is updated by exact search over the finite set \(\mathcal{F}\). Hence, the RIS update also satisfies
\begin{equation}
R(\mathbf{W}^{(t+1)},\boldsymbol{\Theta}^{(t+1)})
\geq
R(\mathbf{W}^{(t+1)},\boldsymbol{\Theta}^{(t)}).
\end{equation}
Therefore,
\begin{equation}
R(\mathbf{W}^{(t+1)},\boldsymbol{\Theta}^{(t+1)})
\geq
R(\mathbf{W}^{(t)},\boldsymbol{\Theta}^{(t)}),
\end{equation}
which shows monotonic improvement. Each coordinate update strictly increases or maintains the objective, and the feasible RIS phase configurations belong to the finite discrete set \(\mathcal{F}^N\). Hence, the number of distinct achievable objective values is finite. Therefore, the AO process can only perform a finite number of strictly improving updates before reaching a configuration where no single coordinate update can further increase the objective. This point is coordinate-wise locally optimal
\end{IEEEproof}

\begin{remark}[Practical Stability]
Although ZF is a heuristic rather than an exact solution to the continuous precoder subproblem, the acceptance rule ensures monotonic improvement and empirical convergence stability.
\end{remark}

The proposed AO algorithm with ZF precoding and discrete coordinate descent RIS phase optimization generates a non-decreasing and bounded sequence of sum-rate values and converges in a finite number of steps to a coordinatewise locally optimal solution of the discretized problem.

\subsection{Learning-Assisted Online Approximation of AO}

Although the proposed AO algorithm provides a strong optimization framework and desirable convergence behavior, its iterative nature may lead to non-negligible latency in online operation. This issue is particularly important in XR applications, where channel conditions and user positions may change frequently and the RIS configuration must be updated with stringent delay requirements. To facilitate practical deployability, we propose a learning-assisted online approximation framework that leverages the AO algorithm offline and reduces the online computational burden.

The key idea is to use the AO solution as a source of supervision. Specifically, we generate a dataset offline by sampling many channel realizations and user/AP geometries, and then solving the joint beamforming and RIS optimization problem using the proposed AO algorithm until convergence. The resulting AO outputs serve as target labels for training a supervised learning model. The input features {are the real and imaginary parts of the AP--RIS channel matrices $\{\mathbf{G}_j\}$ and RIS--UE channel vectors $\{\mathbf{h}_k\}$, concatenated into a single vector}, while the output {consists of predicted discrete phase shift indices for each RIS element (via classification) and predicted AP beamforming vectors (via regression)}.

In the online phase, instead of starting AO from a random RIS configuration, the trained model directly predicts a high-quality solution estimate. This estimate can be used in two modes:
\begin{enumerate}
    \item \textbf{Direct inference mode:} the model output is used directly as the final configuration when very low latency is required.
    \item \textbf{Hybrid refinement mode:} the model output is used as a warm start, followed by a very small number of AO iterations to further improve the objective.
\end{enumerate}

This learning-assisted design is motivated by practical considerations. First, supervised learning is well suited here because the AO algorithm already provides reliable labels, avoiding the need for trial-and-error interaction with the environment. Second, compared with reinforcement learning, the proposed supervised framework is simpler to train, easier to interpret, and more straightforward to deploy. Third, by warm-starting AO from a learned solution rather than a random initialization, the online algorithm can achieve near-converged performance with drastically fewer iterations.

Let \(\mathcal{M}_{\psi}(\cdot)\) denote the trained model with parameters \(\psi\). For an online channel realization \(\mathbf{c}\), the model outputs
\begin{equation}
(\widehat{\mathbf{W}}^{(0)}, \widehat{\boldsymbol{\Theta}}^{(0)}) = \mathcal{M}_{\psi}(\mathbf{c})
\end{equation}
{where $\widehat{\boldsymbol{\Theta}}^{(0)}$ is decoded from predicted phase shift indices via $\widehat{\theta}_n = e^{j\frac{2\pi}{L}\widehat{\ell}_n}$.} {The decoded outputs are then either used directly or passed to AO as initialization.} The refined online solution after \(T_{\mathrm{ref}}\) additional AO iterations is expressed as
\begin{equation}
(\widehat{\mathbf{W}}, \widehat{\boldsymbol{\Theta}})
=
\mathrm{AO}\big(\widehat{\mathbf{W}}^{(0)}, \widehat{\boldsymbol{\Theta}}^{(0)}; T_{\mathrm{ref}}\big)
\end{equation}
where \(T_{\mathrm{ref}} \ll T_{\max}\) in practice.

The objective of the proposed learning-assisted framework is therefore not to replace AO entirely, but rather to approximate its converged solution with negligible online delay and, when needed, recover the remaining performance gap using only lightweight refinement iterations. 

\subsection{Complexity Analysis} 
Each AO iteration requires $\mathcal{O}((JM)^3)$ operations for ZF precoding and $\mathcal{O}(N L K)$ for RIS phase updates, resulting in an overall complexity of $\mathcal{O}(T_{\max}((JM)^3 + N L K))$. With the learning-assisted approach, the complexity reduces to $\mathcal{O}(T_{\mathrm{ref}}((JM)^3 + N L K))$ where $T_{\mathrm{ref}}<T_{\max} $. For large RIS sizes, the phase update stage dominates the runtime, as its complexity scales linearly with both the number of RIS elements and the phase-grid resolution, which motivates the proposed learning-assisted warm start.

\begin{algorithm}[t]
\small
\caption{Offline Training Procedure}
\label{alg:offline_training}
\begin{algorithmic}[1]
\Require Training size $N_{\rm tr}$, channel generator $\mathcal{G}(\cdot)$, AO solver, model $\mathcal{M}_{\psi}$
\Ensure Trained model $\mathcal{M}_{\psi^\star}$

\For{$n=1,\dots,N_{\rm tr}$}
    \State {Sample channels $\{\mathbf{G}_j^{(n)}\}, \{\mathbf{h}_k^{(n)}\} \sim \mathcal{G}(\cdot)$; form $\mathbf{c}^{(n)}$ from real/imag parts}
    \State Compute AO solution $(\mathbf{W}^{\star (n)}, \mathbf{\Theta}^{\star (n)})${; extract phase shift indices $\boldsymbol{\ell}^{\star (n)} \!\in\! \{0,\dots,L{-}1\}^N$}
    \State Store $\big(\mathbf{c}^{(n)},$ $\boldsymbol{\ell}^{\star (n)},$ $\mathbf{W}^{\star (n)})$
\EndFor
\State Train $\mathcal{M}_{\psi}$  {with cross-entropy loss on $\boldsymbol{\ell}$ and MSE loss on $\mathbf{W}$}
\State \Return $\mathcal{M}_{\psi^\star}$
\end{algorithmic}
\end{algorithm}

\begin{algorithm}[!t]
\small
\caption{Online Learning-Assisted AO Refinement}
\label{alg:online_learning_AO}
\begin{algorithmic}[1]
\Require Online channel realization $\mathbf{c}$, trained model $\mathcal{M}_{\psi^\star}$, refinement budget $T_{\rm ref}$
\Ensure Online solution $\widehat{\mathbf{W}}, \widehat{\mathbf{\Theta}}$

\State {Predict phase shift indices and beamformers:}
\[
{(\widehat{\boldsymbol{\ell}}^{(0)},\;} \widehat{\mathbf{W}}^{(0)}) = \mathcal{M}_{\psi^\star}(\mathbf{c})
\]
\State {Decode: $\widehat{\boldsymbol{\Theta}}^{(0)} = \mathrm{diag}\!\big(e^{j\frac{2\pi}{L}\widehat{\ell}_n^{(0)}}\big)_{n=1}^{N}$}
\If{$T_{\rm ref}=0$}
    \State $\widehat{\mathbf{W}} \gets \widehat{\mathbf{W}}^{(0)}$, \quad $\widehat{\mathbf{\Theta}} \gets \widehat{\mathbf{\Theta}}^{(0)}$
\Else
    \State Initialize AO with $(\widehat{\mathbf{W}}^{(0)}, \widehat{\mathbf{\Theta}}^{(0)})$
    \For{$t=1,\dots,T_{\rm ref}$}
        \State Iteratively update the active AP
precoders and passive RIS phase  as in (\ref{subsec: AO})
    \EndFor
    \State Set $(\widehat{\mathbf{W}}, \widehat{\mathbf{\Theta}})$ to the refined solution
\EndIf

\State \Return $\widehat{\mathbf{W}}, \widehat{\mathbf{\Theta}}$
\end{algorithmic}
\end{algorithm}

\section{Simulation and Results}

\subsection{{Learning Model Architecture and Training}}
\label{sec:ml_setup}

{We consider a multi-layer perceptron model with a shared encoder trunk and two task-specific output heads (\emph{SharedEncoderMLP}). The input vector $\mathbf{c} \in \mathbb{R}^{d_{\rm in}}$, with $d_{\rm in} = 2(JNM + KN)$, is formed by concatenating the real and imaginary parts of all AP--RIS channel matrices $\{\mathbf{G}_j\}$ and RIS--UE channel vectors $\{\mathbf{h}_k\}$.}

{The architecture has three stages: (i)~separate encoders for the $\mathbf{G}$ and $\mathbf{h}$ channel features using LayerNorm and BatchNorm respectively, (ii)~a shared two-layer trunk with BatchNorm, GELU activations, and dropout, and (iii)~two output heads. The $\boldsymbol{\theta}$-head produces $N \times L$ logits, considering each RIS element as an independent $L$-way classifier over the discrete phase set $\mathcal{F}$. The $\mathbf{W}$-head outputs $2JM$ real values encoding the AP beamformers. The composite training loss is}
\begin{equation}
{\mathcal{L} = \lambda_\theta \,\mathcal{L}_{\rm CE}(\boldsymbol{\ell}, \widehat{\boldsymbol{\ell}}) \;+\; \lambda_W \,\mathcal{L}_{\rm MSE}(\mathbf{W}, \widehat{\mathbf{W}})}
\end{equation}
{where $\mathcal{L}_{\rm CE}$ is cross-entropy on discrete phase shift classification and $\mathcal{L}_{\rm MSE}$ is MSE on beamformer regression. Training uses AdamW with cosine-annealing schedule and early stopping. Key parameters are listed in Table~\ref{tab:ml_params}. Hidden-layer widths scale with $N$ to accommodate different RIS sizes.}

\begin{table}[!t]
\centering
\caption{{Learning model training parameters.}}
\label{tab:ml_params}
{\begin{tabular}{l l}
\hline
\textbf{Parameter} & \textbf{Value} \\
\hline
Training / val.\ / test samples & 64\,000 / 1\,000 / 1\,000 \\
Batch size & 256 \\
Max epochs & 400 \\
Learning rate & $10^{-2}$ (cosine $\to 10^{-5}$) \\
Weight decay & $5 \times 10^{-4}$ \\
Loss weights $(\lambda_\theta, \lambda_W)$ & $(1.0,\; 0.5)$ \\
Phase resolution $L$ & 16 \\
Dropout & 0.10 \\
Early stopping patience & 60 epochs \\
\hline
\end{tabular}}
\end{table}

\subsection{{Numerical Results}}

In this section, numerical examples are provided to validate the proposed framework. We consider an indoor stage for XR experience of size \(5\)m \(\times 5\)m \(\times 3\)m with a single RIS positioned at (\(5, 2.5, 2\)) and with $N = 32, 64, 128$ RIS elements. We evaluate two network sizes of \(J \in \{2,4\}\) APs and \(K \in \{2,4\}\) users. Each AP employs \(M=8\) antennas and is mounted on the walls of the indoor environment, while each user has a single antenna and is uniformly distributed within the \(5\,\mathrm{m}\times5\,\mathrm{m}\) area, with height fixed at \(1.5\,\mathrm{m}\). The RIS is modeled with non-ideal reflection efficiency \(\eta=0.8\). The carrier frequency is \(60\,\mathrm{GHz}\) with transmission bandwidth \(2.1\,\mathrm{GHz}\) \cite{nitsche2014ieee}, and the noise power spectral density is set to \(-174\) dBm/Hz. As benchmarks, we compare the proposed scheme against i) TDM, where the RIS is optimized for one user at a time using the same AO procedure with equal time allocation, and ii) random RIS phase configuration without AO optimization. All channels follow the Saleh--Valenzuela model, and reported results are averaged over 1000 independent channel and user-location realizations.

For the learning-assisted framework, the dataset is generated offline using the converged AO solutions under many independent realizations. A supervised learning model is then trained to predict a high-quality initialization for the online phase. During testing, we evaluate three online modes: fully iterative AO from random initialization, direct prediction using the trained model, and hybrid prediction followed by a very small number of AO refinement iterations.

Fig.~\ref{fig:sumratevsbaseline} shows the achievable sum rate of the proposed algorithm at different \(J\) and \(K\) with RIS size equal to 128 elements. The figure emphasizes that at transmit power 30 dBm, which corresponds to a relatively high SNR regime, the joint optimization of precoder and RIS phases allows the system to exploit spatial diversity of the RIS in simultaneous communication. In both cases of \((J,K) = (2,2)\) and \((4,4)\), the joint optimization algorithm outperforms the TDM and random cases. Interestingly, as the number of APs and UEs increases from 2 to 4 in the same indoor stage size, the optimized RIS is still able to manage interference and achieve improved sum rate in the 4-AP case over the 2-AP case.

Fig.~\ref{fig:sumrate1vsN}, on the other hand, focuses on studying the effect of different RIS sizes of \(N = 32, 64,\) and \(128\) on the sum rate of the proposed algorithm when the transmit power is 30 dBm. It clearly demonstrates that the sum rate increases as the size of the RIS increases from 32 to 128 elements.

Fig.~\ref{fig:sumratevsSNR} illustrates the achievable sum rate of the joint optimization algorithm at various transmit powers when \(J = K = 4\). It can be seen that the achievable sum rate from all users increases as the transmit power increases, and this improvement becomes more pronounced when the transmit power exceeds 10 dBm.

The convergence behavior of the proposed AO algorithm was empirically evaluated, revealing a median convergence requirement of approximately 10--12 iterations at the 50th percentile when initialized randomly. While this confirms the stability of the iterative solver, it also highlights the practical limitation of AO for online deployment in latency-sensitive XR settings.

To address this issue, Fig.~\ref{fig:cdfLearning} shows the sum-rate CDF for $N \in \{64, 96, 128\}$ in three modes: AO-only, NN-only (direct model inference without AO refinement), and NN+AO (model prediction followed by AO refinements). The results show that NN-only approximates the AO-only CDF reasonably well, particularly at moderate $N$, which confirms that the learned model captures the essential structure of the AO solution. Moreover, with only a few refinement iterations, NN+AO closely approaches the performance of converged AO across all $N$ values and percentiles, demonstrating that the learned warm start provides near-AO performance with significantly lower iterative complexity.
We further quantify the iteration savings in Fig.~\ref{fig:iterate_fig}.  As $N$ grows from 16 to 128, AO requires considerably more iterations (from ${\sim}1.6$ to ${\sim}10.2$), whereas NN+AO converges with much fewer iterations (from ${\sim}0.1$ to ${\sim}4.7$), while both achieve nearly identical sum-rate. Due to space constraints, latency is reported for one representative case with $N=128$ in which the average online optimization time was 220 ms (AO-only), 0.5 ms (NN-only) and 98 ms (NN+AO), respectively. We have also tested the AO and the learning-assisted algorithms against imperfect CSI by imposing a normalized Gaussian channel estimation error of 0.1 and the sum-rate degradation remains below 1 bps/Hz. These results confirm that the proposed learning-assisted method substantially accelerates convergence, making it well suited for real-time RIS-assisted XR deployments.

\begin{figure}[!t]
    \centering
    \includegraphics[width=0.45\textwidth, height=0.2\textheight]{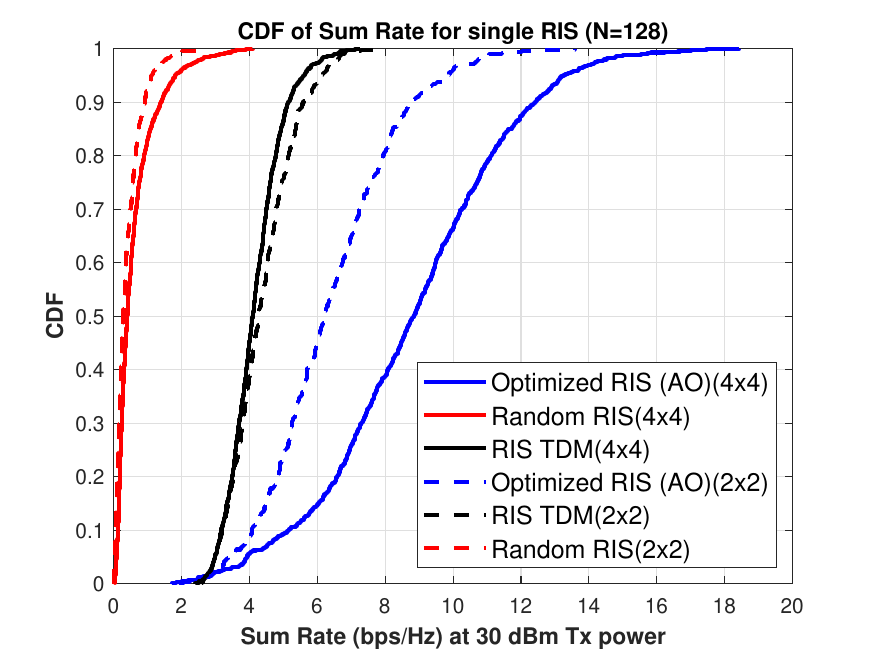}
     \vspace{-2mm}
    \caption{Sum-rate comparison at $N=128$ and $J=(2,4)$.}
    \label{fig:sumratevsbaseline}
\end{figure}

\begin{figure}[!t]
    \centering
    \includegraphics[width=0.45\textwidth, height=0.2\textheight]{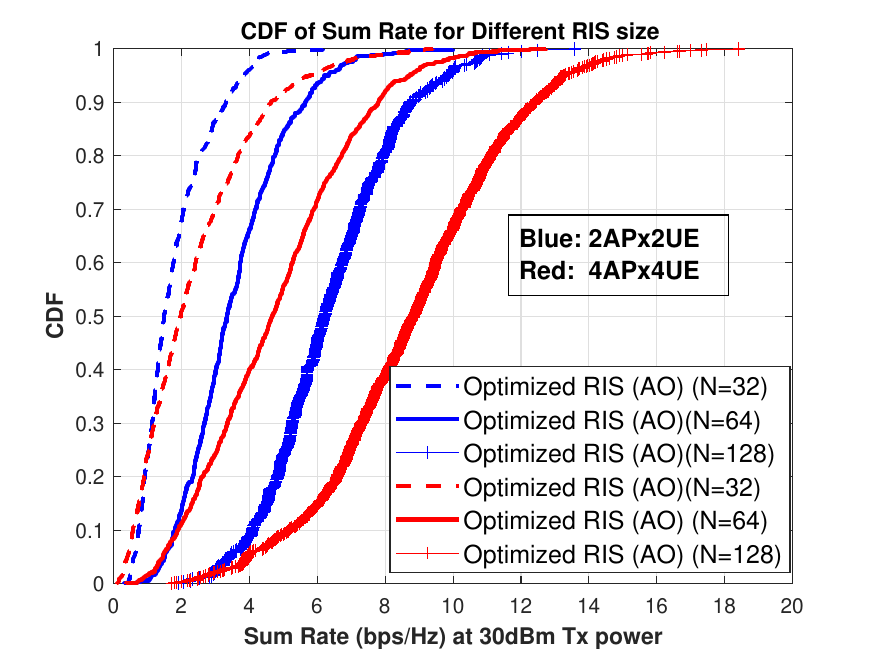}
     \vspace{-2mm}
    \caption{Optimized sum-rate at different $N$ and different $J,K$.}
    \label{fig:sumrate1vsN}
\end{figure}

\begin{figure}[!t]
    \centering
 \includegraphics[width=0.45\textwidth, height=0.205\textheight]{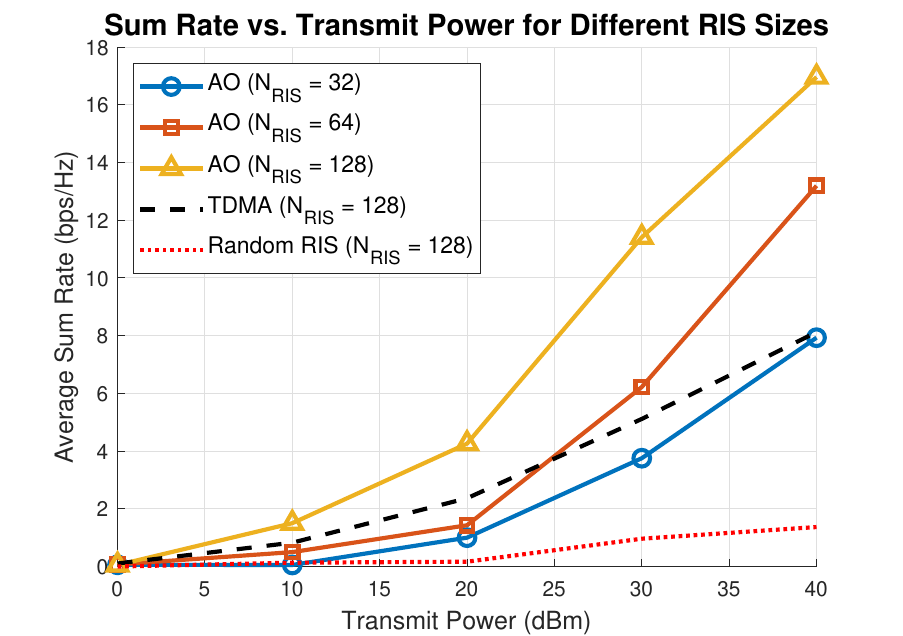}
  \vspace{-2mm}
    \caption{Sum-rate vs. transmit power with 4 APs \(\times\) 4 UEs at different $N$}
    \label{fig:sumratevsSNR}
\end{figure}

\begin{figure}[!t]
    \centering
    \includegraphics[width=0.4\textwidth, height=0.205\textheight]{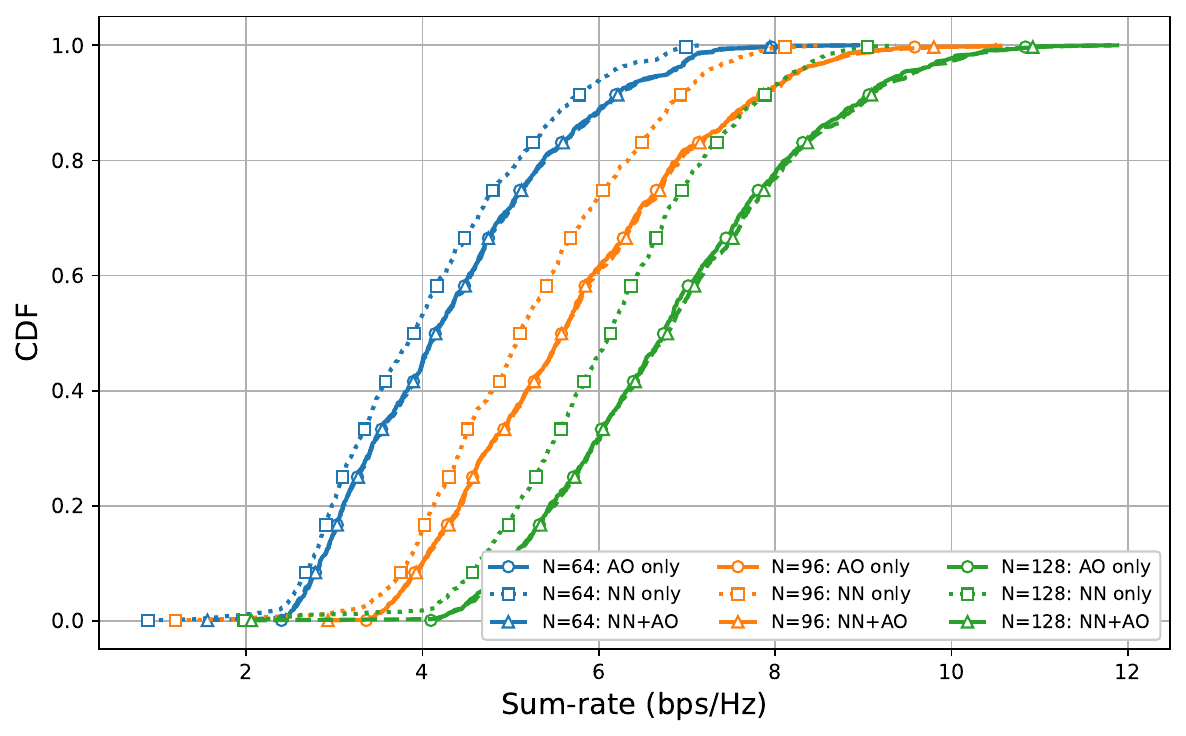}
     \vspace{-2mm}
    \caption{\small Sum-rate CDF for different $N$ values}
    \label{fig:cdfLearning}
\end{figure}

\begin{figure}[!t]
    \centering
    \includegraphics[width=0.4\textwidth,height=0.205\textheight]{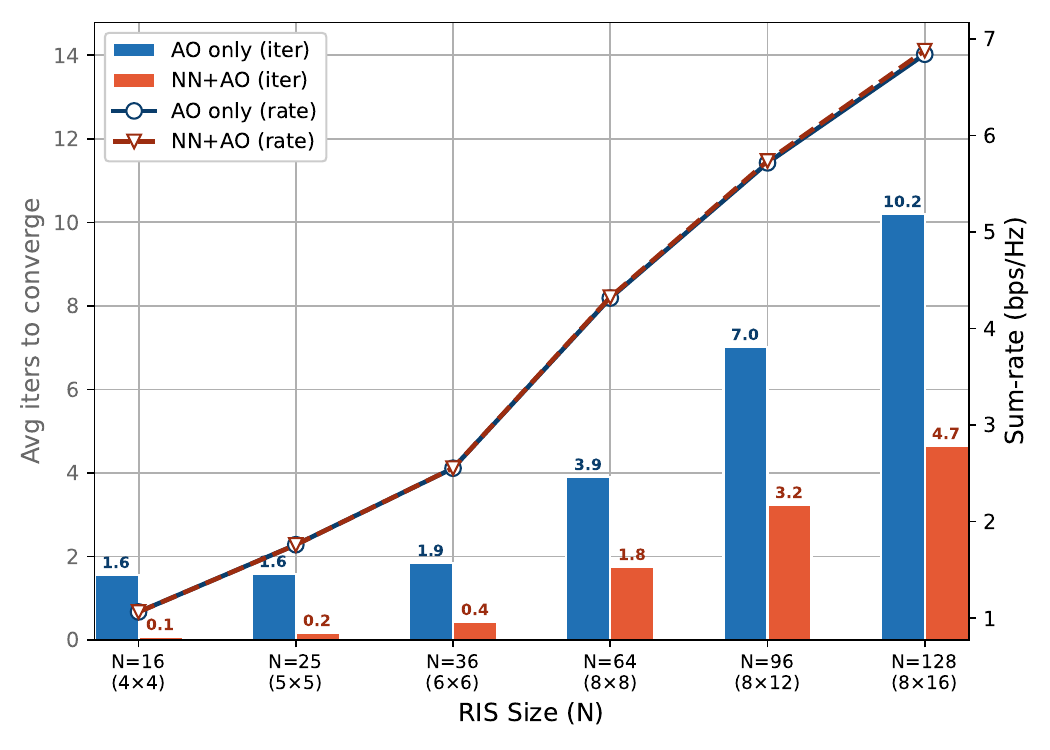}
     \vspace{-4mm}
    \caption{\small Iterations \& Sum-rate for different $N$ values.}
    \label{fig:iterate_fig}
\end{figure}

\section{Conclusion}

In this work, we formulated a sum-rate maximization problem for a shared single-RIS multi-AP  XR architecture and developed an AO-based solution for joint AP precoding and RIS phase design, which achieves monotonic convergence to a locally optimal solution. To reduce online complexity, we further proposed a supervised learning-assisted initialization strategy that significantly accelerates convergence while preserving near-optimal performance. The resulting hybrid model-based and learning-driven framework provides a practical solution for lowering online optimization time.

{\bf Acknowledgement}

This publication is based upon work supported by King Fahd University of 
Petroleum \& Minerals. Author(s) at KFUPM acknowledge the Deanship of Research at King Fahd University of
Petroleum \& Minerals for the support received under Grant no. IN26019. 

\bibliographystyle{IEEEtran}
\bibliography{referencesAO}

\end{document}